
\documentclass[11pt]{article}%
\usepackage{amsmath,amsfonts,amssymb,amsthm}
\usepackage{graphicx}
\usepackage[pagebackref]{hyperref}
\usepackage{pgfplots}
\usepackage{enumerate}
\usepackage{amsmath}
\usepackage{amsfonts}
\usepackage{amssymb}%

\usepackage[hmargin=1in,bmargin=1.35in,tmargin=1in]{geometry}

\setcounter{MaxMatrixCols}{30}
\providecommand{\U}[1]{\protect\rule{.1in}{.1in}}
\DeclareMathAlphabet{\mathbbold}{U}{bbold}{m}{n}
\hypersetup{
colorlinks=true,     linkcolor=blue,     citecolor=blue,     urlcolor=blue }
\pgfplotsset{compat=1.13}
\renewcommand{\backref}[1]{}

\renewcommand{\backrefalt}[4]{\ifcase #1 \or
[p.\ #2]\else
[pp.\ #2]\fi}
\makeatletter
\newcommand{\para}{ \@startsection{paragraph}{4} {\z@}{2ex \@plus 3.3ex \@minus .2ex}{-1em} {\normalfont\normalsize\bfseries}}
\makeatother
\providecommand{\U}[1]{\protect\rule{.1in}{.1in}}
\newtheorem{theorem}{Theorem}

\newtheorem{lemma}[theorem]{Lemma}

\let\oldproofname=\proofname
\renewcommand{\proofname}{\rm\bf{\oldproofname}}

\newcommand{\eps}{\varepsilon}

\newcommand{\mathify}[1]{\ifmmode{#1}\else\mbox{$#1$}\fi}

\begin{document}

\title{Quantum Approximate Counting, Simplified}
\author{Scott Aaronson\thanks{University of Texas at Austin. \ Email:
\texttt{aaronson@cs.utexas.edu}. \ Supported by a Vannevar Bush Fellowship
from the US Department of Defense, a Simons Investigator Award, and the Simons
\textquotedblleft It from Qubit\textquotedblright\ collaboration.}
\and Patrick Rall\thanks{University of Texas at Austin. \ Email:
\texttt{patrickjrall@gmail.com}. \ Supported by Aaronson's Vannevar Bush
Faculty Fellowship from the US Department of Defense.}}
\date{}
\maketitle

\begin{abstract}
In 1998, Brassard, H\o yer, Mosca, and Tapp (BHMT) gave a quantum algorithm
for \textit{approximate counting}. \ Given a list of $N$ items, $K$ of them
marked, their algorithm estimates $K$ to within relative error $\varepsilon
    $\ by making only $O\left( \eps^{-1} \sqrt{N/K} \right)$\ queries. \ Although this speedup is of \textquotedblleft
Grover\textquotedblright\ type, the BHMT algorithm has the curious feature of
relying on the Quantum Fourier Transform (QFT), more commonly associated with
Shor's algorithm. \ Is this necessary? \ This paper presents a simplified algorithm which we prove achieves the same query complexity using Grover
iterations only. \ We also generalize this to a QFT-free algorithm for
amplitude estimation. \ Related approaches to approximate counting were sketched previously by Grover, Abrams and Williams, Suzuki et al., and Wie (the latter two as we were writing this paper), but in all cases without rigorous analysis.
\end{abstract}

\begingroup
\renewcommand\thefootnote{}\footnote{Update November 2021: changed several constants throughout and gave an updated proof that simplifies the analysis. This also remedies an algebra mistake present in the previous version.}\addtocounter{footnote}{-1}\endgroup

\section{Introduction\label{INTRO}}

{\em Approximate counting} is one of the most fundamental problems in computer science. \ Given a list of $N$ items, of which $K>0$ are marked, the problem is to estimate $K$ to within a multiplicative error of $\varepsilon$. \ One wants to do this with the minimum number of {\em queries}, where a query simply returns whether a given item $i\in [N]$ is marked.

Two decades ago, Brassard, H\o yer, Mosca, and Tapp \cite{bhmt} gave a celebrated quantum algorithm for approximate counting, which uses $O\left( \eps^{-1} \sqrt{N/K} \right)$\ queries. \ This is tight, matching a lower bound of Nayak and Wu \cite{nayakwu}, and is a quadratic speedup over the best possible classical query complexity of $\Theta\left( \eps^{-2} (N/K)  \right)  $. \ This is the same type of quantum speedup as provided by the famous {\em Grover's algorithm} \cite{grover}, for {\em finding} a marked item in a list of size $N$, and indeed the BHMT algorithm builds on Grover's algorithm.

Curiously, though, the BHMT algorithm was {\em not} just a simple extension of Grover's algorithm to a slightly more general problem (approximate counting rather than search). \ Instead, BHMT made essential use of the Quantum Fourier Transform (QFT): a component that appears nowhere in Grover's algorithm, and that's more commonly associated with the exponential speedup of Shor's factoring algorithm \cite{shor}. \ Indeed, BHMT presented their approximate counting algorithm as a sort of hybrid of Grover's and Shor's algorithms.

This raises an obvious question: is the QFT in any sense {\em necessary} for the quadratic quantum speedup for approximate counting? \ Or can we obtain that ``Grover-like speedup'' by purely Grover-like means?

In this paper we settle that question, by giving the first rigorous quantum approximate counting algorithm that's based {\em entirely} on Grover iterations, with no QFTs or other quantum-mechanical ingredients. \ Matching \cite{bhmt}, the query complexity of our algorithm is the optimal $O\left( \eps^{-1} \sqrt{N/K} \right) $, while the computational complexity exceeds the query complexity by only an $O(\log N)$ multiplicative factor. \ Because of its extreme simplicity, our algorithm might be more amenable than \cite{bhmt} or other alternatives to implementing on near-term quantum computers. \ The analysis of our algorithm is also simple, employing standard classical techniques akin to estimating the bias of a coin via many coin tosses.\footnote{ Indeed, \cite{bhmt} also requires such classical techniques, since the QFT alone fails to reliably extract the desired information from the Grover operator. \ By removing the QFT, we show that Grover and estimation techniques alone can do the whole job. \ This gives a clear way to understand in what sense our algorithm is ``simpler.'' }

An approach broadly similar to ours was outlined by Grover \cite{grover-mean} in 1997, with fuller discussion by Abrams and Williams \cite{abrams-williams} in 1999. \ The latter authors sketched how to estimate the integral of a function over some domain, to additive error $\eps$, using $O(\eps^{-1})$ quantum queries. \ Crucially, however, neither Grover nor Abrams and Williams prove the correctness of their approach---among other issues, they assume that a probability can be estimated to a desired precision without any chance of failure. \ Also, it is not clear how to adapt their approaches to the broader problem of amplitude estimation.

As we were writing this paper, two other quantum algorithms for approximate counting were announced that avoid the use of QFTs. \ Surprisingly, both algorithms differ significantly from ours.

In April, Suzuki et al.\ \cite{suzuki}\ gave an $O\left( \eps^{-1} \sqrt{N/K} \right)$-query quantum algorithm that first collects samples from various Grover iterations, and then extracts an approximate value of $K$ via maximum likelihood estimation. \ Finding the maximum of the likelihood function, according to Suzuki et al., incurs a $\log \frac{1}{\eps}$ computational overhead. \ More importantly, even if we focus only on query complexity, Suzuki et al.\ do not prove their algorithm correct. \ Their analysis gives only a \emph{lower} bound on the error, rather than an upper bound, so is supplemented by numerical experiments. \ By contrast, our analysis is fully rigorous. \ On the other hand, the Suzuki et al.\ algorithm has the interesting feature that its invocations of Grover's algorithm are nonadaptive (i.e., can be performed simultaneously), whereas our algorithm requires adaptivity.

In July, Wie \cite{wie} sketched another $O\left( \eps^{-1} \sqrt{N/K} \right)$-query QFT-free quantum approximate counting algorithm. \ Wie's algorithm is based on Hadamard tests, which require the more expensive ``controlled-Grover'' operation rather than just bare Grover iterations. \ Replacing the QFT with Hadamard tests is called ``iterative phase estimation,'' and was suggested by Kitaev \cite{kitaev:meas}. \ Wie modifies iterative phase estimation in order to apply it to the BHMT algorithm. \ Unfortunately, and like the previously mentioned authors, Wie gives no proof of correctness. \ Indeed, given a subroutine that accepts with probability $p$, Wie (much like Abrams and Williams \cite{abrams-williams}) simply assumes that $p$ can be extracted to the requisite precision. \ There is no analysis of the overhead incurred in dealing with inevitable errors. \ Again, in place of analysis there are numerical experiments.

One reason why approximate counting is of interest in quantum computation is that it generalizes to a technique called {\em amplitude estimation}. \ Amplitude estimation is a pervasive subroutine in quantum algorithms, yielding for example faster quantum algorithms for mean estimation, estimation of the trace of high-dimensional matrices, and estimation of the partition function in physics problems \cite{montanaro:partition}. \ In general, amplitude estimation can upgrade almost any classical Monte-Carlo-type estimation algorithm to a quantum algorithm with a quadratic improvement in the accuracy-runtime tradeoff. \ Once we have our quantum approximate counting algorithm, it will be nearly trivial to do amplitude estimation as well.

In Section \ref{MAIN} we present our main result---the QFT-free approximate counting algorithm and its analysis---and then in Section \ref{AMPEST} we generalize it to amplitude estimation. \ We conclude in Section \ref{OPEN} with some open problems.

\subsection{Main Ideas}

Our algorithm for approximate counting mirrors a standard classical approach for the following problem: given a biased coin that is heads with probability $p$, estimate $p$. \ First, for $k = 0,1,2,\ldots$ the coin is tossed $2^k$ times, and stop at the $k$ where heads was observed at least once. \ This gives a rough guess for $p$, up to some multiplicative constant. \ Second, this rough guess is improved to the desired $1+\eps$ approximation via more coin tosses. 

Of course, we'd like to use Grover's algorithm \cite{grover} to speed up this classical approach quadratically. \ Grover's algorithm can be seen as a special `quantum coin,' which works as follows. \ If $K$ out of $N$ items are marked, then define $\theta := \arcsin \sqrt{K/N}$ to be the `Grover angle.' \ For any odd integer $r$, Grover's algorithm lets us prepare a coin that lands heads with probability $p = \sin^2(r\theta)$, by making $O(r)$ queries to the oracle.

The key idea of our algorithm is to use this `Grover coin' repeatedly, in a manner akin to binary search---adaptively varying the value of $r$ in order to zero in on the correct value of $\theta$ and hence $K$. \ In more detail, suppose that $\theta$ has already been narrowed down to the range $[\theta_\text{min},\theta_\text{max}]$. \ Then in a given iteration of the algorithm, the goal is to shrink this range by a constant factor, either by increasing $\theta_\text{min}$ or by decreasing $\theta_\text{max}$. \ To do so, we need to rule out one of the two possibilities $\theta \approx \theta_\text{min}$ or $\theta \approx \theta_\text{max}$. \ This, in turn, is done by finding some value of $r$ that distinguishes the two possibilities, by making $\theta \approx \theta_\text{min}$ and $\theta \approx \theta_\text{max}$ lead to two nearly-orthogonal quantum states that are easy to distinguish by a measurement.

But why should such a value of $r$ even exist---and if it does, why should it be small enough to yield the desired query complexity? \ Here we need a technical claim, which we call the ``Rotation Lemma'' (Lemma~\ref{lemma:rotation}). \ Consider two runners, who race around and around a circular track at differing constant speeds (corresponding to $\theta_\text{min}$ and $\theta_\text{max}$). \ Then informally, the Rotation Lemma upper-bounds how long we need to wait until we find one runner reasonably close to the start or the midpoint of the track, while the other runner is reasonably close to the one-quarters or three-quarters points. \ Here we assume that the ratio of the runners' speeds is reasonably close to $1$. \ We ensure this property with an initial preprocessing step, to find bounds $\theta_\text{min} \leq \theta \leq \theta_\text{max}$ such that $\theta_\text{max}/\theta_\text{min} \leq 1.65$.

Armed with the Rotation Lemma, we can zero in exponentially on the correct value of $\theta$, gaining $\Omega(1)$ bits of precision per iteration. \ The central remaining difficulty is to deal with the fact that our `Grover coin' is, after all, a {\em coin}---which means that each iteration of our algorithm, no matter how often it flips that coin, will have some nonzero probability of guessing wrong and causing a fatal error. \ Of course, we can reduce the error probability by using amplification and Chernoff bounds. \ However, amplifying na\"{i}vely produces additional factors of $\log(\frac{1}{\varepsilon})$ or $\log\log(\frac{1}{\varepsilon})$ in the query complexity. \ To eliminate those factors and obtain a tight result, we solve an optimization problem to find a carefully-tuned amplification schedule, which then leads to a geometric series for the overall query complexity.

\section{Approximate Counting\label{MAIN}}

We are now ready to state and analyze our main algorithm.

\begin{theorem} \label{thm:approxcount}  Let $S \subseteq [N]$ be a nonempty set of marked items, and let $K = |S|$. \ Given access to a membership oracle to $S$ and $\eps,\delta > 0$, there exists a quantum algorithm that outputs an estimate $\hat K$. The output $\hat K$ satisfies
    $$K(1-\eps) < \hat K < K (1+\eps)$$
    with probability at least $1-\delta$. There exists a function $Q(N,K,\eps,\delta) \in O\left(  \sqrt{N/K} \eps^{-1} \log(\delta^{-1}) \right)$ such that the algorithm makes fewer than $Q(N,K,\eps,\delta)$ queries whenever the estimate is accurate. The algorithm needs $O(\log N)$ qubits of space.
\end{theorem}
\begin{proof} The algorithm is as follows.\\[3mm]
\noindent\fbox{\parbox{\textwidth}{\hspace{3mm}\parbox{0.95\textwidth}{\hspace{1mm}\vspace{0mm}\\ 
    \textbf{Algorithm: Approximate Counting}\\
    \textbf{Inputs:} $\eps,\delta > 0$ and an oracle for membership in a nonempty set $S \subseteq [N]$.  \\
    \textbf{Output:} An estimate of $K = |S|$.

    \noindent  We can assume without loss of generality that $K \ll N$, for example by padding out the list with $10^6  N$ unmarked items. Let $\theta := \arcsin \sqrt{K/N}$; then since $K/N \leq  (10^6+1)^{-1} $, we have $\theta \leq 0.001$.

    Let $U$ be the membership oracle, which satisfies $U|x\rangle = (-1)^{x \in S} |x\rangle$. \ Also, let $|\psi\rangle$ be the uniform superposition over all $N$ items, and let $G := (I - 2|\psi \rangle\langle \psi |)U$ be the Grover diffusion operator. 
    \begin{enumerate}
        \item For $k := 0,1,2,\ldots$:
        \begin{enumerate}
            \item Let $r_k$ be the largest odd integer less than or equal to $1.05^k$. \ Prepare the state $G^{(r_k-1)/2} |\psi\rangle$ and measure. \ Do this at least $5000 \cdot\ln(5/\delta)$ times.
            \item If a marked item was measured $\geq 95\%$ of the time, exit the loop on $k =: k_\text{end}$.
        \end{enumerate}
    \item Initialize $\theta_\text{min} := 0.9 \cdot 1.05^{-k_\text{end}}  $ and $\theta_\text{max} := 1.65 \cdot  \theta_\text{min}  $. Then, for $t := 0,1,2,\ldots$:
        \begin{enumerate}
        \item  Use Lemma~\ref{lemma:rotation} to choose $r_t$.
        \item Prepare the state $G^{(r_t-1)/2} |\psi\rangle$ and measure. \ Do this at least $250 \cdot \ln\left( \delta_t^{-1} \right)$ times, where $\delta_t := (\delta \eps / 65) \cdot (0.9)^{-t}$.
        \item Let $\gamma := \theta_\text{max}/\theta_\text{min} - 1$.\ If a marked item was measured at least $12\%$ of the time, set $\theta_\text{min} := \theta_\text{max}/(1+0.9\gamma)$.\ Otherwise, set $\theta_\text{max} := (1+0.9\gamma)\theta_\text{min}$.
        \item If $\theta_\text{max} \le (1+\eps/5)\theta_\text{min}$ then exit the loop.
        \end{enumerate}
    \item Return $\hat K := N \cdot \sin^2\left(\theta_\text{max}\right)$ as an estimate for $K$.
    \end{enumerate}}}}\\ 

    The algorithm naturally divides into two pieces. \ First, step 1 computes an ``initial rough guess'' for the angle $\theta$ (and hence, indirectly, the number of marked items $K$), accurate up to {\em some} multiplicative constant, but not necessarily $1+\eps/5$. \ More precisely, step 1 outputs bounds $\theta_\text{min}$ and $\theta_\text{max}$, which are supposed to satisfy $\theta_\text{min} \le \theta \le \theta_\text{max}$ and satisfy $\theta_\text{max}/\theta_\text{min} \leq 1.65$. \ Next, step 2 improves this constant-factor estimate to a $(1+\eps/5)$-factor estimate of $\theta$, yielding a $(1+\eps)$-factor estimate of $K$.

Both of these steps repeatedly prepare and measure the following quantum state \cite{grover}:
    \begin{equation}G^{(r-1)/2}|\psi\rangle =  \frac{\sin(r\theta)}{\sqrt{K}} \sum_{x\in S} |x\rangle   +  \frac{\cos(r\theta) }{\sqrt{N-K}}  \sum_{x\not\in S} |x\rangle.  \end{equation}
        Note that, if this state is measured in the computational basis, then the probability of observing a marked item is $\sin^2(r\theta)$. This circuit requires $O(r)$ queries and needs $O(\log N)$ qubits to store $|\psi\rangle$.

    In what follows, we'll first prove that step 1 indeed returns a constant-factor $1.65$ approximation to $\theta$ with probability $\geq 1-\delta/2$. When it succeeds at this, we show that its query complexity is also $O\left( \sqrt{ N/K } \log\left( \delta^{-1} \right) \right)$. Next, we show that step 2 improves the estimate to a $(1+\eps/5)$-factor approximation with probability $\geq 1- \delta/2$, and deterministically requires an additional $O\left( \sqrt{ N/K } \eps^{-1} \log\left(\delta^{-1}\right) \right)$. The total failure probability is $\leq \delta/2 + \delta/2 = \delta$.

    \noindent \textit{Correctness of step 1.} First, we describe the ideal behavior: we want to terminate at a $k_\text{end}$ such that the resulting bounds $\theta_\text{min},\theta_\text{max}$ are accurate. Let $k_0$ be given by:
    \begin{align}
        k_0 := \text{the largest integer such that } \theta \cdot  1.05^{k_0} \leq 0.9
    \end{align}

Ideally, for values of $k$ satisfying $0 \leq k \leq k_0$, we do not terminate. Then we might terminate at $k_0+1$, $k_0+2$, etc., but if we reach $k_0+ 10$ then we definitely terminate there. Given that we have $k_0 + 1 \leq k_\text{end} \leq k_0 + 10$, we derive:
    \begin{align}
        k_\text{end} - 10  \leq  k_0 &\leq k_\text{end} - 1\\
       \theta \cdot 1.05^{k_\text{end}-10}  \leq \theta \cdot 1.05^{k_0}  &\leq 0.9 \leq \theta \cdot 1.05^{k_0 + 1} \leq \theta \cdot 1.05^{k_\text{end}} \\
        0.9 \cdot 1.05^{-k_\text{end}}   \leq \theta &\leq 0.9 \cdot  1.05^{-k_\text{end}} \cdot 1.05^{10}\\
        &\leq 0.9 \cdot 1.05^{-k_\text{end}} \cdot 1.65 
    \end{align}
 We have proved that in the ideal case we have $\theta \in [\theta_\text{min},\theta_\text{max}]$. Also see how $\theta_\text{max}/\theta_\text{min} = 1.65$. Next we move on to demonstrating that this ideal case occurs with probability $\geq 1-\delta/2$. The proof splits into two cases: terminating too late and terminating too early.

    We show that we do not terminate too late: if we reach $k = k_0  + 10$, then we terminate there with probability $\geq 1-  \delta/5$. Since $r_k$ is obtained by rounding downward to the nearest odd number, we can bound the rounding error:
    \begin{align}
        r_{k_0 + 10} &\geq 1.05^{k_0+10} - 2\\
        1.58 &\geq 0.9 \cdot 1.05^{10}  \geq 1.05^{k_0+10}\theta \geq 0.9\cdot 1.05^{10} \geq 1.396 \\
        \sin^2( r_{k_0+10} \theta ) &\geq \sin^2( (1.05^{k_0  + 10} - 2) \theta ) \geq  0.99 \cdot \sin^2(1.05^{k_0  + 10} \theta).
    \end{align}
    In the final equation we leverage that, on the interval $[1.396, 1.58]$, we have that $\sin^2(x)$ is increasing and that $\sin^2(x - 0.002) > 0.99\cdot\sin^2(x)$.  We also use that $\theta < 0.001$, and that the error in $\sin^2(r\theta)$ due to rounding $r$ only gets better with decreasing $\theta$.
    
    From this we conclude $ \sin^2(r_{k_0 + 10}\theta) \geq 0.99 \cdot \sin^2( 1.396 ) \geq 0.96 $. We see that:
    \begin{align}
        \text{Pr}[ \text{fail to terminate at } k = k_0 + 10  ] &=  \text{Pr}[ \text{fraction of marked items} \leq  95\% ] \\
        &\leq \exp\left( - 2 \cdot 5000 \cdot \left( 0.96 - 0.95 \right)^2 \ln(5/\delta) \right) \\
        &= \exp( - \ln(5/\delta)) = \delta/5
    \end{align}

    So all that remains to show is that we do not terminate too early: for all $k \leq k_0$ we do not see enough heads. Abbreviate $p_k = \sin^2(r_k \theta)$. For this calculation we demand the tighter version of the Chernoff-Hoeffding theorem: assuming $p_k < 95\%$, if we toss coin with bias $p_k$ a total of $m$ times, then the probability we see more than $95\%$ heads is bounded by:
    \begin{align}
 \text{Pr}[\text{fraction observed heads} \geq 95\%]  \leq  \exp( - m D( 95\% || p_k )  ) 
    \end{align}
where $D( 95\% || p_k)$ is the Kullback-Leibler divergence of two Bernoulli trials, given by:
\begin{align}
    D( 95\% || p_k) = (95\%) \ln \frac{95\%}{p_k} + (1- 95\%) \ln \frac{1 - 95\%}{ 1-p_k}
\end{align}
With $m = 5000 \ln(5/\delta)$, we can now bound:
    \begin{align}
        \text{Pr}[\text{terminate at } k] &\leq \exp( - m D( 95\% || p_k )  ) \\
        &= \left[   \left(\frac{95\%}{p_k}\right)^{-95\%} \left( \frac{5\%}{1-p_k} \right)^{-5\%} \right]^m\\
        &\leq  \left[  (95\%)^{-95\%} (5\%)^{-5\%}  \right]^m \cdot p_k^{m\cdot 95\%} \\
        &\leq  1.22^m \cdot p_k^{m\cdot 95\%} 
    \end{align}
    Next, we plug in $p_k = \sin^2(r_k \theta) \leq (r_k \theta)^2 \leq (\theta \cdot 1.05^k )^2$, and use the union bound to bound the probability that we terminate at or before $k_0$:
    \begin{align}
        \text{Pr}[\text{terminate} \leq k_0] &= \sum_{k=0}^{k_0} \text{Pr}[\text{terminate at } k] \\ 
        &\leq  1.22^m \cdot  \sum_{k=0}^{k_0} (\theta \cdot 1.05^k )^{1.9 m} \\
       \sum_{k=0}^{k_0} (\theta \cdot 1.05^k )^{1.9 m}  &= \theta^{1.9 m}  \cdot \sum_{k=0}^{k_0} (1.05^{1.9 m})^k\\
        &= \theta^{1.9 m}  \cdot  \frac{(1.05^{1.9 m})^{k_0 + 1} - 1}{1.05^{1.9 m} - 1} \\
        &\leq \left( \theta \cdot 1.05^{k_0} \right)^{1.9 m}  \cdot  \frac{1.05^{1.9 m}}{1.05^{1.9 m} - 1} \\
        &\leq  0.9^{1.9 m}  \cdot 1.01 \\
        \text{Pr}[\text{terminate} \leq k_0] &\leq  \left[ 1.22 \cdot 0.9^{1.9} \right]^m \cdot 1.01\\
        &\leq  \left[ 0.9984 \right]^{ 5000 \ln(5/\delta) } \cdot 1.01\\
        &\leq  \left[ e^{-1} \right]^{\ln(5/\delta) } \cdot 1.01 \\
        &=  e^{-\ln(5/\delta)} = (\delta/5) \cdot 1.01   \leq \delta/4
    \end{align}
    The probability we terminate too early is $\leq \delta/4$, and the probability we terminate too late is $\leq \delta/5$, so the probability that the first step has the ideal behavior is $\geq 1-\delta/2$.

 \noindent    \textit{Query complexity of step 1.} When the first step behaves ideally, then it terminates at $k_\text{end} = k_0 + 10$ at the latest. Since another way of writing $k_0$ is $k_0 =\lfloor  \log_{1.05}(0.9/\theta) \rfloor $, we have
    \begin{align}
        \sum_{k=0}^{k_\text{end}} r_k \leq \sum_{k=0}^{k_0 + 10} 1.05^k \leq \frac{ 1.05^{k_0 + 11} -1  }{1.05-1} \in O( 1.05^{k_0}) \leq O( \theta^{-1} ) 
    \end{align}
        Since at each iteration we run $O(\log(\delta^{-1}))$ circuits, the total complexity in the ideal case is $O( \theta^{-1} \log(\delta^{-1}) ) = O(\sqrt{N/K} \log(\delta^{-1}))$.

 \noindent   \emph{Correctness of step 2.} Let $\gamma := \theta_\text{max} / \theta_\text{min} - 1$. \ By definition, the $\theta_\text{min}, \theta_\text{max}$ initialized at the beginning of step 2 satisfy $\gamma  = 1.65  -1  = 0.65 \leq 0.7 $. \ We also see that $\theta_\text{max} \leq 1.65\theta \leq 0.002$, so we satisfy the conditions of Lemma~\ref{lemma:rotation}. The coin described in the lemma is implemented by measuring the state $G^{(r-1)/2}|\psi\rangle$.

    Each iteration of step 2 will modify $\theta_\text{min}$ or $\theta_\text{max}$ in order to reduce $\gamma$ by exactly a factor of $0.9$. \ Each iteration preserves $\theta_\text{min} \leq \theta \leq \theta_\text{max}$ with high probability.  When the algorithm terminates we have $\theta_\text{max}/\theta_\text{min} \leq 1 + \eps/5$ which implies that any value $\hat\theta$ between $\theta_\text{min}$ and $\theta_\text{max}$ satisfies $(1- \eps/5)\theta \leq \hat\theta \leq (1+ \eps/5)\theta$ as desired. \ A simple calculation\footnote{Show $ 1 - \eps \leq \sin^2(\theta (1+\eps/5)) / \sin^2(\theta) \leq 1+\eps$ by Taylor expanding  $\sin^2(\theta (1+\eps/5)) / \sin^2(\theta) $ around $\eps = 0$ and truncating the series to obtain bounds.} shows that these multiplicative error bounds on the estimate for $\theta$ guarantee the desired  $(1-\eps)K \leq \hat K \leq (1+\eps)K$.

    We start with $t= 0$, so at the beginning of step $t$ we have $ \gamma_t = 0.65 \cdot 0.9^{t}$. Let $T$, the iteration after which we terminate in step 2, be given by:
    \begin{align}
        T := \text{the largest integer satisfying } \gamma_T = 0.65 \cdot (0.9)^{T} \geq \frac{\eps}{5}
    \end{align}
   This way, after the $T$'th step we have $\gamma_{T+1} \leq \eps/5$, so we stop. (Note that there are $T+1$ total iterations.) For convenience we define $b := 1/0.9 \approx 1.11$. Note that $b >1$, which makes $\log_b(x)$ behave intuitively. Then: 
    \begin{align}
        T \leq \log_{b} \left( \frac{3.25}{\eps} \right) 
    \end{align}

  We used Lemma~\ref{lemma:rotation} to guarantee that the failure probability at the $t$'th  iteration is at most $\delta_t := (\delta \eps / 65) \cdot (0.9)^{-t} = (\delta\eps / 65) \cdot b^t$. \ By the union bound, the overall failure probability is then at most:
    \begin{align}
        \sum_{t=0}^T \delta_t = \frac{\delta \eps}{65} \sum_{t=0}^T b^t \leq \frac{\delta \eps}{65} \frac{b^{T+1} - 1}{b-1} \leq \frac{\delta \eps}{65} \frac{b}{b-1} \frac{3.25}{\eps} = \frac{\delta}{2}
    \end{align}

  \noindent  \emph{Query complexity of step 2.} From Lemma~\ref{lemma:rotation}, we know that $r_t \in O(\theta^{-1}\gamma_t^{-1})$. Recall that $\gamma_t = 0.65 \cdot b^{-t}$. Each step makes $O(\ln(\delta_t^{-1}))$ queries. So the total query complexity is given by:
    \begin{align}
        O\left( \sum_{t=0}^T \frac{1}{\theta} \frac{1}{\gamma_t} \ln\left( \frac{1}{\delta_t} \right)  \right) &\leq     O\left( \frac{1}{\theta} \sum_{t=0}^T b^t \left[\ln\left( \frac{1}{\delta} \right) + \ln\left( \frac{1}{\eps} \right) - t \ln\left( b\right) \right] \right)\\
                    &\leq     O\left( \frac{1}{\theta} b^T \ln\left( \frac{1}{\delta} \right) + \frac{1}{\theta} \sum_{t=0}^T b^t \left[ \ln\left( \frac{1}{\eps} \right) - t \ln\left( b\right) \right] \right)\\
                    &\leq     O\left( \sqrt{\frac{N}{K}} \frac{1}{\eps} \ln\left( \frac{1}{\delta} \right) \right) + O\left(\frac{1}{\theta} \sum_{t=0}^T b^t \left[ \ln\left( \frac{1}{\eps} \right) - t \ln\left( b\right) \right] \right)
    \end{align}
The first term is the desired complexity, so all that remains to show is that the second term is dominated by the first term. To do so, we compute some bounds on $T$:
\begin{align}
    T +1&\geq\log_b \left( \frac{3.25}{\eps} \right) \\
    \ln(b) (T+1) &\geq \ln(b) \frac{\ln\left( \frac{3.25}{\eps} \right)}{\ln(b)} = \ln\left( \frac{3.25}{\eps} \right) \geq \ln\left( \frac{1}{\eps} \right)
\end{align}
Now we bound the second term, dropping the $1/\theta$. To aid intuition, recall that $b > 1$.
\begin{align}
    \sum_{t=0}^T b^t \left[ \ln\left( \frac{1}{\eps} \right) - t \ln\left( b\right) \right] &= \ln\left(\frac{1}{\eps}\right) \frac{b^{T+1}-1}{b-1}  - \ln\left(b\right) \frac{b}{(b-1)^2} \left[ T b^{T+1} - (T+1)b^T + 1\right]\\
    &\leq \ln\left(\frac{1}{\eps}\right) \frac{b^{T+1}}{b-1}  + \ln\left(b\right) \frac{b}{(b-1)^2} \left[- T b^{T+1} + (T+1)b^T \right]\\
    &= \ln\left(\frac{1}{\eps}\right) \frac{b^{T+1}}{b-1}  + \ln\left(b\right) \frac{b}{(b-1)^2} \left[b^{T+1} - (T+1) b^{T+1} + (T+1)b^T  \right]\\
    &= \ln\left(\frac{1}{\eps}\right) \frac{b^{T+1}}{b-1}   + \ln\left(b\right) \frac{b}{(b-1)^2} \left[  b^{T+1} - (T+1) b^{T}(b-1) \right]\\
    &= \ln\left(\frac{1}{\eps}\right) \frac{b^{T+1}}{b-1}   - \ln\left(b\right)(T+1) \frac{b^{T+1}}{b-1} + \ln\left(b\right) \frac{b^{T+2}}{(b-1)^2} \\
    &\leq \ln\left(\frac{1}{\eps}\right) \frac{b^{T+1}}{b-1}   - \ln\left(\frac{1}{\eps}\right) \frac{b^{T+1}}{b-1} + \ln\left(b\right) \frac{b^{T+2}}{(b-1)^2} \\
    &= \ln\left(b\right) \frac{b^{T+2}}{(b-1)^2} \in   O\left( \frac{1}{\eps} \right)
\end{align}
\end{proof}

We note that this algorithm can also be used to determine if there are no marked items. If there is at least one marked item, then $\theta \geq \arcsin\sqrt{1/N}$. That means, that there must be some point in step 1(b) where we are likely to see enough marked items. If we fail to see enough marked items then, then we must have $K = 0$ with high probability.

Next we prove Lemma~\ref{lemma:rotation}, which constructs a number of rotations $r$ such that when $\theta \approx \theta_\text{max}$ it is very likely to see a marked item, and when $\theta \approx \theta_\text{min}$ it is very unlikely to see a marked item.


\begin{lemma}\label{lemma:rotation} \textbf{Rotation lemma.} Say we are given $\theta_\text{min},\theta_\text{max}$ such that $0<\theta_\text{min}\leq\theta\leq\theta_\text{max}\leq 0.002$ and  $\theta_\text{max} = ( 1 + \gamma)\cdot \theta_\text{min}$ for some $\gamma \leq 0.7$. \ Then we can calculate an odd integer $r\in O( \theta^{-1}\gamma^{-1})$ such that the following is true: Consider tossing a coin that is heads with probability $\sin^2(r\theta)$ at least $250 \cdot \ln(\delta^{-1})$ times, and subsequently

\begin{enumerate}
    \item if more than $12\%$ heads are observed set $\theta_\text{min} $ to $ \theta_\text{max} / (1+0.9\gamma)$,
    \item and otherwise set $\theta_\text{max} $ to $(1+0.9\gamma)\theta_\text{min}$.
\end{enumerate}
    This process fails to maintain $\theta_\text{min} \leq \theta \leq \theta_\text{max}$ with probability less than $\delta$. 
\end{lemma}


\begin{proof} We compute $r$ as follows (when rounding, ties between integers can be broken arbitrarily):
    \begin{align}
        \Delta\theta &:= \theta_\text{max} - \theta_\text{min}\\
        k &:= \text{ the closest integer to } \frac{\theta_\text{min}}{2\Delta\theta}\\
        r &:= \text{ the closest odd integer to } \frac{\pi k}{\theta_\text{min}}
    \end{align}
   First we show that $r\theta_\text{min} \approx \pi k$ and $r\theta_\text{max} \approx \pi k + \frac{\pi}{2}$. We notice some basic facts about $\Delta\theta$ following from $\theta_\text{max} = ( 1 + \gamma)\cdot \theta_\text{min}$:
    \begin{align}
        \frac{\Delta\theta}{\theta_\text{min}} &= \gamma \\
        \frac{\Delta\theta}{\theta_\text{max}} &= 1 - \frac{1}{1+\gamma}
    \end{align}
Now we bound the rounding errors on $k$ and $r$:
    \begin{align}
        \left|k   - \frac{\theta_\text{min}}{2\Delta\theta} \right| &\leq \frac{1}{2}\\
         \left|r   - \frac{\pi k}{\theta_\text{min}} \right| &\leq 1
    \end{align}
    We arrive with bounds on $r\theta_\text{min}$ and $r\theta_\text{max}$:
    \begin{align}
        \left| r\theta_\text{min}   - \pi k \right| &\leq \theta_\text{min} \leq 0.002\\
        \left| \pi k \frac{\Delta\theta}{\theta_\text{min}}   - \frac{\pi}{2} \right| &\leq \frac{\pi}{2} \cdot \frac{\Delta\theta}{\theta_\text{min}} = \frac{\gamma\pi}{2} \\
        \left| r\theta_\text{max} - \left(\pi k + \frac{\pi}{2}\right) \right| &\leq \left| \pi k \frac{\theta_\text{max}}{\theta_\text{min}} - \left(\pi k + \frac{\pi}{2}\right) \right| + \theta_\text{max}\\
        &= \left| \left(\pi k + \pi k \frac{\Delta\theta}{\theta_\text{min}}\right) - \left(\pi k + \frac{\pi}{2}\right) \right| + \theta_\text{max}\\
        &\leq  \frac{\gamma \pi}{2}  + \theta_\text{max}\\
        &\leq 0.7 \cdot \frac{\pi}{2} + 0.002 \leq 1.102
    \end{align}
    Given these bounds, we can examine the two cases when we fail to preserve $\theta_\text{min} \leq \theta \leq \theta_\text{max}$ and demonstrate that they are unlikely: we could see many heads even though $\theta$ is small, thus fail to preserve $\theta_\text{min} \leq \theta$, or we could see few heads even though $\theta$ is large, and thus fail to preserve $\theta \leq \theta_\text{max}$.

    First, suppose that $\theta_\text{min} \leq \theta \leq \theta_\text{max} / (1 + 0.9\gamma)$, so $\theta$ is near the bottom of the interval. We are to show that it is unlikely that we see too many heads.
    \begin{align}
        r\theta &\geq r\theta_\text{min} \geq \pi k - 0.002\\
        r\theta &\leq \frac{r\theta_\text{max}}{1+ 0.9\gamma}\\
                &\leq r\theta_\text{max} - \left( r\theta_\text{max}  - \frac{r\theta_\text{max}}{1+0.9\gamma}  \right)\\
                &\leq r\theta_\text{max} - r \Delta\theta \frac{\theta_\text{max}}{\Delta\theta} \left( 1 - \frac{1}{1+0.9\gamma}  \right)\\
                &\leq r\theta_\text{max} - r \Delta\theta  \frac{ 1 - \frac{1}{1+0.9\gamma} }{ 1 - \frac{1}{1 + \gamma} }
    \end{align}
    We find $ \frac{ 1 - \frac{1}{1+0.9\gamma} }{ 1 - \frac{1}{1 + \gamma} } \geq 0.9$: viewing the left hand side as a function of $\gamma$, this function increases with $\gamma$, so the smallest value occurs as $\gamma \to 0$. Proceeding with the upper bound on $r\theta$:
    \begin{align}
         r\theta&\leq r\theta_\text{max} - 0.9 \cdot r \Delta\theta  \\
                &\leq 0.1 \cdot r\theta_\text{max} + 0.9 \cdot r \theta_\text{min}\\
                &\leq \pi k +  0.1 \cdot \left(\frac{\pi}{2}  + 1.102\right)  + 0.9 \cdot 0.002 \\
                &\leq \pi k +  0.27.
    \end{align}
    Since $-0.002 \leq r\theta - \pi k \leq 0.27$, we have $\sin^2(r\theta) \leq 7.5\%$. So, the probability we fail is bounded by:
    \begin{align}
        \text{Pr}[\text{see more than } 12\% \text{ heads}] \leq \exp(- 2 \cdot (12\% - 7.5\%)^2 \cdot 250 \ln(\delta^{-1})   ) \leq \delta
    \end{align}

    Now, we consider the case $\theta_\text{min} (1+0.9\gamma) \leq \theta \leq \theta_\text{max} $ where $\theta$ is near the top of the interval. We must show that it is unlikely that we see too few heads.
    \begin{align}
        r\theta &\leq r\theta_\text{max} \leq \pi k + \frac{\pi}{2} + 1.102 \leq \pi k + 2.68\\
        r\theta &\geq r\theta_\text{min}(1+0.9\gamma) \\
                &= r\theta_{min} + 0.9 \cdot r\Delta\theta\\
                &= 0.9 \cdot r\theta_\text{max} + 0.1 \cdot r\theta_\text{min}\\
                &\geq 0.9 \cdot \left( \pi k + \frac{\pi}{2} - 1.102  \right) + 0.1 \cdot \left(\pi k - 0.002\right)\\
                &\geq \pi k + 0.9 \cdot \frac{\pi}{2} - 0.9\cdot1.102 - 0.1\cdot 0.002\\
                &\geq \pi k + 0.42.
    \end{align}
    Since $0.42 \leq r\theta - \pi k \leq 2.68$ we have $\sin^2(r\theta) \geq 16.5\%$.  So, the probability we see too few heads is bounded by:
    \begin{align}
        \text{Pr}[\text{see fewer than } 12\% \text{ heads}] \leq \exp(- 2 \cdot (16.5\% - 12\%)^2 \cdot 250 \ln(\delta^{-1})   ) \leq \delta
    \end{align}

    We have shown correctness. Finally, we see that:
    \begin{align}
        r \in O\left( \frac{k}{\theta_\text{min}} \right) \leq O\left( \frac{1}{\Delta\theta}  \right) \leq O\left( \frac{1}{\gamma\theta} \right).
    \end{align}
\end{proof}

\section{Amplitude Estimation\label{AMPEST}}

We now show how to generalize our algorithm for approximate counting to amplitude estimation: given two quantum states $|\psi\rangle$ and $|\phi\rangle$, estimate the magnitude of their inner product $a = \left|\langle\psi|\phi\rangle\right|$. \ We are given access to these states via a unitary $U$ that prepares $|\psi\rangle$ from a starting state $|0^n\rangle$, and also marks the component of $|\psi\rangle$ orthogonal to $|\phi\rangle$ by flipping a qubit. \ Recall that our analysis of approximate counting was in terms of the `Grover angle' $\theta := \arcsin\sqrt{K/N}$. \ By redefining $\theta := \arcsin a$, the entire argument can be reused.

\begin{theorem} \label{thm:ampest} Given  $\eps,\delta > 0$ as well as access to an $(n+1)$-qubit unitary $U$ satisfying
    $$U|0^n\rangle|0\rangle = a |\phi\rangle |0\rangle + \sqrt{1-a^2}|\tilde\phi\rangle|1\rangle,$$
    where $|\phi\rangle$ and $|\tilde\phi\rangle$ are arbitrary $n$-qubit states and $0<a<1$,\footnote{Note that we can always make $a$ real by absorbing phases into $|\phi\rangle,|\tilde\phi\rangle$.} there exists an algorithm that outputs an estimate $\hat a$ that satisfies
    $$  a(1-\eps) < \hat a < a(1+\eps)  $$
    and uses $O\left( a^{-1} \eps^{-1} \log(\delta^{-1}) \right)$ applications of $U$ and $U^\dagger$ with probability at least $1-\delta$.
\end{theorem}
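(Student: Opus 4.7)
I would follow the hint in the section preamble and reuse the proof of Theorem~\ref{thm:approxcount} almost verbatim, with two adjustments: replace the Grover diffusion operator by the standard amplitude amplification operator built from $U$, and apply a padding trick to force the effective ``angle'' into the small-angle regime that the counting analysis assumes.

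First, build the analogue of the Grover coin. Let $S_0 := (I - 2|0^n\rangle\langle 0^n|)\otimes I$ and $S_\chi := I \otimes (I - 2|1\rangle\langle 1|)$, and define
\[
  Q \;:=\; -U\,S_0\,U^\dagger\,S_\chi.
\]
The standard two-dimensional analysis of amplitude amplification shows that $Q$ acts as a rotation by $2\theta$ with $\theta := \arcsin a$ on the plane spanned by $|\phi\rangle|0\rangle$ and $|\tilde\phi\rangle|1\rangle$, so
\[
  Q^{(r-1)/2}\, U|0^n\rangle|0\rangle \;=\; \sin(r\theta)\,|\phi\rangle|0\rangle + \cos(r\theta)\,|\tilde\phi\rangle|1\rangle,
\]
and measuring the flag qubit returns $|0\rangle$ with probability $\sin^2(r\theta)$. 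This is exactly the coin used in Theorem~\ref{thm:approxcount} and Lemma~\ref{lemma:rotation}, and one application of $Q$ costs one call each to $U$ and $U^\dagger$.

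Second, enforce the hypothesis $\theta \leq \pi/1000$ (achieved in the counting setting by list-padding). Fix a small constant $m$ with $2^m \geq 10^6$ and pass to the enlarged unitary $U' := H^{\otimes m} \otimes U$ on $m + n + 1$ qubits, redefining ``good'' as ``auxiliary register $=|0^m\rangle$ and flag $= |0\rangle$.'' Then $U'|0^{m+n}\rangle|0\rangle$ has amplitude $a' := a/\sqrt{2^m} \leq 10^{-3}$ on the good subspace, so the effective angle $\theta' := \arcsin a'$ obeys $\theta' \leq \pi/1000$. The new ``$S_\chi'$'' is a reflection about an easily computable subset of basis states and adds no calls to $U$ or $U^\dagger$.

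Third, run the algorithm of Theorem~\ref{thm:approxcount} verbatim with $Q'$ in place of $G$. The probabilistic bookkeeping (Chernoff bounds, Rotation Lemma, tuned amplification schedule) carries over unchanged, and with probability at least $1-\delta$ we obtain $\theta_{\min} \leq \theta' \leq \theta_{\max}$ with $\theta_{\max}/\theta_{\min} \leq 1 + \eps/5$, at a cost of $O\bigl(\tfrac{1}{\theta'}\tfrac{1}{\eps}\log\tfrac{1}{\delta}\bigr) = O\bigl(\tfrac{1}{a}\tfrac{1}{\eps}\log\tfrac{1}{\delta}\bigr)$ applications of $U$ and $U^\dagger$. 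Output $\hat a := \sqrt{2^m}\sin(\theta_{\max})$. Since $\theta'$ and $\theta_{\max}$ are both $O(10^{-3})$, the map $x \mapsto \sin x$ is nearly linear there ($\sin x/x \in (1 - x^2/6,\,1]$), so a $(1\pm\eps/5)$-multiplicative estimate of $\theta'$ transfers to a $(1\pm\eps)$-multiplicative estimate of $a' = \sin\theta'$ and hence of $a = \sqrt{2^m}\,a'$, the $O(10^{-6})$ distortion from the sine being comfortably absorbed into the final $\eps$. The one step that really has to be verified is this last translation from angular accuracy to amplitude accuracy; everything else is a direct transplant of the approximate-counting proof through the substitution $\theta := \arcsin a$.
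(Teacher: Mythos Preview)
Your proposal is correct and follows essentially the same approach as the paper. The only cosmetic difference is the padding mechanism: the paper attenuates the amplitude by tensoring $U$ with a single-qubit rotation $R$ satisfying $R|0\rangle = \tfrac{1}{1000}|0\rangle + \sqrt{1-10^{-6}}\,|1\rangle$ (one extra ancilla), whereas you use $m\approx 20$ Hadamard ancillas to achieve the same $a \mapsto a/\sqrt{2^m}$ shrinkage; the subsequent construction of the amplitude-amplification operator, the reduction to the approximate-counting analysis, and the conversion from a $(1\pm\eps/5)$-estimate of the angle to a $(1\pm\eps)$-estimate of the amplitude are identical.
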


\begin{proof} The algorithm is as follows.\\

    \noindent\fbox{\parbox{\textwidth}{\hspace{3mm}\parbox{0.95\textwidth}{\hspace{1mm}\vspace{2mm}\\ 
    \textbf{Algorithm: Amplitude Estimation}\\
    \textbf{Inputs:} $\eps,\delta > 0$ and a unitary $U$ satisfying
        $U|0^n\rangle|0\rangle= a |\phi\rangle |0\rangle + \sqrt{1-a^2}|\tilde\phi\rangle|1\rangle$.\\
    \textbf{Output:} An estimate of $a$.\\

    Let $R$ satisfy $R|0\rangle = \frac{1}{1001} |0\rangle + \sqrt{1-\left(\frac{1}{1001}\right)^2}|1\rangle$. Then:
    \begin{equation}U|0^n\rangle|0\rangle \otimes R|0\rangle = \frac{ a}{1001} |\phi\rangle|00\rangle + \text{ terms orthogonal to }|\phi\rangle|00\rangle \end{equation}
    Define $\theta :=\arcsin \frac{a}{1001}$ and we have $0 \leq \theta \leq 0.001$. Let the Grover diffusion operator $G$ be:
    \begin{equation}G := -(U\otimes R)(I -2 |0^{n+2}\rangle\langle 0^{n+2}|)(U\otimes R)^\dagger(I_{n+2} - 2(I_n\otimes |00\rangle\langle 00|))\end{equation}

    \begin{enumerate}
        \item Follow steps 1 and 2 in the algorithm for approximate counting. An item is `marked' if the final two qubits are measured as $|00\rangle$.
        \item Return $\hat a := 1001\cdot\sin\left(\theta_\text{max}\right)$ as an estimate for $a$.
    \end{enumerate}
    }}}\\

    If we write
    \begin{equation}(U\otimes R)|0^{n+2}\rangle = \sin\theta |\phi00\rangle + \cos\theta |\phi00^\perp\rangle \end{equation}
    where $|\phi00^\perp\rangle$ is the part of the state orthogonal to $|\phi00\rangle$, then the Grover operator $G$ rotates by an angle $2\theta$ in the two-dimensional subspace spanned by $\{|\psi00\rangle, |\psi00^\perp\rangle\}$. \ Therefore:
    \begin{equation}G^{(r-1)/2}(U\otimes R)|0^{n+2}\rangle = \sin(r\theta) |\phi00\rangle + \cos(r\theta) |\phi00^\perp\rangle \end{equation}
     making the probability of observing $|00\rangle$ on the last two qubits equal to $\sin^2(r\theta)$. This is the quantity bounded by Lemma~\ref{lemma:rotation}.

    The remainder of the proof is identical to the one for approximate counting, which guarantees that $\theta_\text{max}$ is an estimate of $\theta$ up to a $1+\eps/5$ multiplicative factor. \ A simple calculation shows that $1001\cdot\sin\left(\theta_\text{max}\right)$ is then an estimate of $a$ up to a $1+\eps$ multiplicative factor.
\end{proof}

\section{Open Problems\label{OPEN}}

What if we limit the ``quantum depth'' of our algorithm? \ That is, suppose we assume that after every $T$ queries, the algorithm's quantum state is measured and destroyed. \ Can we derive tight bounds on the quantum query complexity of approximate counting in that scenario, as a function of $T$? \ This question is of potential practical interest, since near-term quantum computers {\em will} be severely limited in quantum depth. \ It's also of theoretical interest, since new ideas seem needed to adapt the polynomial method \cite{bbcmw} or the adversary method \cite{ambainis} to the depth-limited setting (for some relevant work see \cite{jmw}).

Can we do approximate counting with the optimal $O\left( \eps^{-1} \sqrt{N/K}\right)
$ quantum query complexity (and ideally, similar runtime), but with Grover invocations that are parallel rather than sequential, as in the algorithm of Suzuki et al. \cite{suzuki}?

\section*{Acknowledgments}

We thank Paul Burchard for suggesting the problem to us, as well as John Kallaugher, C.\ R.\ Wie, Ronald de Wolf, Naoki Yamamoto, Xu Guoliang, Seunghoan Song, Hanqing Wu, and Sabee Grewal for helpful discussions. We especially thank Justin Yirka for carefully reading the manuscript and catching several errors.

\bibliographystyle{alpha}
\bibliography{oct7_noqft}

\end{document}